\newenvironment{enumerate*}%
  {\vspace{-2ex} \begin{enumerate} %
     \setlength{\itemsep}{-1ex} \setlength{\parsep}{0pt}}%
  {\end{enumerate}}
\newenvironment{itemize*}%
  {\vspace{-2ex} \begin{itemize} %
     \setlength{\itemsep}{-1ex} \setlength{\parsep}{0pt}}%
  {\end{itemize}}
\newenvironment{description*}%
  {\vspace{-2ex} \begin{description} %
     \setlength{\itemsep}{-1ex} \setlength{\parsep}{0pt}}%
  {\end{description}}
\let\Pr\relax
\DeclareMathOperator*{\Pr}{\mathbb{P}}
\newtheorem{theorem}{Theorem}
\newtheorem{lemma}{Lemma}
\newtheorem{invariant}{Invariant}
\author{Zhengyu Wang\footnote{School of Engineering and Applied Sciences, Harvard University, zhengyuwang@g.harvard.edu. Supported by NSF Grant CCF-1350670.}}
\title{An Improved Randomized Data Structure for Dynamic Graph Connectivity}
\begin{document}

\maketitle

\abstract

We present a randomized algorithm for dynamic graph connectivity. With failure probability less than $1/n^c$ (for any constant $c$ we choose), our solution has worst case running time $O(\log^3 n)$ per edge insertion, $O(\log^4 n)$ per edge deletion, and $O(\log n/\log\log n)$ per query, where $n$ is the number of vertices. The previous best algorithm has worst case running time $O(\log^4 n)$ per edge insertion and $O(\log^5 n)$ per edge deletion. The improvement is made by reducing the randomness used in the previous result, so that we save a $\log n$ factor in update time. 

Specifically, \cite{kapron2013dynamic} uses $\log n$ copies of a data structure in order to boost a success probability from $1/2$ to $1-n^{-c}$. We show that, in fact though, because of the special structure of their algorithm, this boosting via repetition is unnecessary. Rather, we can still obtain the same correctness guarantee with high probability by arguing via a new invariant, without repetition.

\newpage

\section{Overview }\label{sec:overview}

\subsection{Dynamic Graph Connectivity Problem}

We are given an undirected graph $G=(V,E)$ with $|V|=n$ and initially $E=\emptyset$. We have $m=poly(n)$ operations coming online, each of which is either an update inserting/deleting an edge into/from $G$,  or a query asking whether two vertices are connected to each other. We assume the set of vertices cannot be updated. More formally, there are $m$ operations of the following three possible forms:

\begin{itemize}
\item Insert($e$): insert edge $e$ into the graph. It is guaranteed that $e$ is not in $G$.
\item Delete($e$): delete edge $e$ from the graph. It is guaranteed that $e$ is in $G$. 
\item Query($u, v$): return whether vertex $u$ and $v$ are connected in $G$.
\end{itemize}

The dynamic graph connectivity problem asks us to maintain a dynamic data structure to answer all the queries, at the same time minimizing processing time for each operation. The processing time can be measured in both the amortized case and worst case. 

For the amortized case, \cite{henzinger1996improved} proposes an algorithm with amortized time complexity $O(\log^2 n)$ per update. A further result by \cite{thorup2000near} gives a nearly optimal randomized solution with expected amortized time $O(\log n (\log \log n)^3)$ per update, and $O(\log n/\log \log \log n)$ per query. 

However, for the worst case, the current best deterministic algorithm achieves $O(\sqrt{n}/\log^{1/4} n)$ worst case update time and constant query time \cite{kejlberg2015deterministic}, improving upon the $O(\sqrt{n})$ update time achieved in \cite{frederickson1985data, eppstein1997sparsification}. It is a longstanding open question whether there exists a deterministic $polylog(n)$-time algorithm per operation in the worst case. Using the power of randomness, \cite{kapron2013dynamic} achieves worst case time $polylog(n)$ per operation. They propose a randomized algorithm which has worst case running time $O(\log^4 n)$ for edge insertion, $O(\log^5 n)$ for edge deletion, and $O(\log n /\log \log n)$ for each query. Moreover, the algorithm has one-sided error. That is, with probability less than $1/n^c$, the algorithm will declare that $u$ and $v$ are not connected even though they are. 
The reason is that in their algorithm (in fact, most of the algorithms in the literature), they use a tree to represent each maximal connected component. It is relatively easy to process each insert operation, because when one inserts an edge, if the two vertices are not previously connected, then we merge the trees which contain these two vertices. But it is hard to process delete operations, because once we delete a tree edge, we have to find a non-tree edge which reconnects two components. It might be possible that we do not find a reconnecting edge even though one exists.

\subsection{Previous Work}

The framework used by most previous works goes as follows. For given $G=(V,E)$, we maintain a spanning forest for $G$ via a dynamic tree structure supporting the following three kinds of operations efficiently.
\begin{itemize}
\item Link two disjoint trees to form a new tree. Specifically, link$(u,v)$ receives $u, v$ in two different trees of the forest $F$, then adds the edge $(u,v)$ to $F$;
\item Cut a tree edge so that the tree is split into two; 
\item Query whether two vertices are in the same tree. 
\end{itemize}

Actually, there are already many good ways to support all these three operations with deterministic worst case running time $O(\log n)$ per operation, since they are standard operations supported by dynamic trees such as the ET-tree \cite{henzinger1999randomized}. With a dynamic tree structure, we can deal with the operations in the dynamic connectivity problem as follows:

\begin{itemize}
\item When answering a query whether vertices $u$ and $v$ are connected, we just need to check if they are in the same tree in the dynamic tree structure; 
\item When inserting a new edge, if the edge is across two different trees, we link these two trees via this new edge in the dynamic tree structure so that we keep the invariant that the structure we are maintaining forms a spanning tree for the graph;
\item When deleting an edge, if the edge is not a tree edge, then we simply remove the edge without changing the dynamic tree structure. 
\end{itemize}

However, difficulty arises when we insert an edge which connects two vertices in the same tree, or when we delete a tree edge in the dynamic tree structure. Indeed, we can disregard new edges connecting vertices in the same tree, without affecting the invariant that we are maintaining a spanning forest. However, when we delete a tree edge, cutting the tree into two parts which we call $A$ and $B$, we need to check whether these two parts can reconnect, in order to maintain a spanning forest of $G$. More specifically, we want to find an edge across $A$ and $B$, i.e.\ an edge in the cutset $E\cap (A\times B)$ (or $E\cap (A\times (V\backslash A))$ because of the invariant) if one exists. Thus, if we follow this natural framework, we need to find a clever method to find an edge in the cutset. Also, we cannot afford to do nothing during insertions between edges already in the same tree, since such edges may become important after later deletions of tree edges.

One naive way is to undergo a brute force search of all possible edges. For example, iterate over all edges adjacent to $A$, and check whether there is one such edge with the other end in $B$. One breakthrough made in \cite{henzinger1996improved} is by observing that we can organize the edges in a clever way so that every edge is checked at most $O(\log n)$ times. Since every time an edge is checked one dynamic tree operation is performed, they get an amortized time $O(\log^2 n)$ per update. A further improvement was made to amortized time $O(\log n\log \log^3 n)$ in \cite{thorup2000near} via a more delicate data structure and randomized sampling technique. \cite{kapron2013dynamic} proposed a data structure to find such an edge in $polylog(n)$ worst case time with the guarantee that if there is at least one edge in the cutset, the data structure will find one with probability at least $1-1/n^c$. It is mainly based on the following two observations. 
\begin{enumerate}
\item First, if there is exactly one edge in the cutset $E\cap (A\times B)$, then the edge can be found in $O(\log n)$ worst time by a so-called XOR trick. it goes as follows: assign each edge a unique name (from $1$ to ${n \choose 2}$, for example), and for each vertex $v$, maintain the xor value of the names of edges adjacent to $v$, which we call $xor(v)$. Moreover, let $xor(A) = \bigoplus_{v\in A} xor(v)$ where $A$ is a set. One important property of the value is that for a vertex set $S\subseteq V$, the XOR of these xor values over all elements of $S$ equals the XOR of the edges in the cut set $E\cap (S\times (V\backslash S))$. Therefore, if in the dynamic tree structure we also maintain the XOR value of $xor(v)$ for all vertices $v$ in a tree, then we can get the name of the only edge in the cutset, which is $xor(A)$. The ET (Euler tour) tree introduced in \cite{henzinger1999randomized} can be used to maintain these XOR values (that is, we maintain a XOR value for each tree) in $O(\log n)$ time per operation in the worst case.
\item Second, if there is more than one edge in the cutset then we can use a sampling method as follows. For our graph $G=(V, E)$, we keep $2\log n+1$ subgraphs of $G$, namely $G_0,\ldots, G_{2\log n}$, where $G_i=(V,E_i)$ is subsampled from $G$ by selecting each edge in $E$ independently with probability $1/2^i$, i.e., $\forall e\in E, \Pr(e\in E_i)=1/2^i$.  Now instead of keeping track of the $xor$ values only on the graph $G=G_0$, we also maintain the $xor$ values for $G_i$, for $i=1,\ldots, 2\log n$. It is proven in \cite{kapron2013dynamic} that if $E\cap (A\times B)\ne \emptyset$ and $A,B$ are independent from $E_1,\ldots, E_{2\log n}$, with probability more than a positive constant, there is exactly one edge in $E_i\cap (A\times B)$ for some $i$. So with probability greater than a positive constant, we can find an edge in a non-empty cutset. If we duplicate $O(\log n)$ copies with independent randomness, we can boost the success possibility to $1-1/n^c$.  In order to know whether a level succeeded, it is sufficient to store all edges $E$ in a hash table, so we can check whether the XOR is actually a real edge by looking in the hash table in $O(1)$ time.

\end{enumerate}

Specifically,  Lemma~\ref{ds2} below summarizes the cutset data structure implemented in \cite{kapron2013dynamic}.

\begin{lemma}\label{ds2}[Cutset data structure]
There is a data structure maintaining a {\em dynamic forest} $F=\{T_i=\{V_{T_i}, E_{T_i}\}, i=1,\ldots\}$ for a dynamic graph $G=(V,E)$, where $\{V_{T_i}\}$ is a partition of $V$, and denote $E_F=\cup_{i}{E_{T_i}}$ as tree edges. It supports the following operations, with the requirement that when invoking function $insertTreeEdge$ or $deleteTreeEdge$, the input edge $e$ must be independent from the (random) outputs obtained by all previous calls to the function $outgoingEdge$.

\begin{itemize}
\item $insertTreeEdge(e=\{u,v\})$: insert edge $e$ into the forest $F$, combining the tree containing $u$ and the one containing $v$ together to be a single tree via $e$; 
\item $deleteTreeEdge(e=\{u,v\})$: remove edge $e$ from the tree containing it, splitting the tree into two new trees; 
\item $insertEdge(e)$: insert edge $e$ into $E$;
\item $deleteEdge(e)$: remove edge $e$ from $E$;
\item $tree(v)$: return the name of the tree containing vertex $v$ (each tree has a unique name);
\item $outgoingEdge(T)$: for input $T\in F$ ($T$ is specified by its name), let $C_T=E\cap (V_T\times (V\backslash V_T))$. (1) If $C_T=\emptyset$, return {\textbf{null}}; (2) If $C_T\ne \emptyset$, return an edge $e\in C_T$ with probability (which we call ``success probability'') at least $1/2$ or return {\textbf{null}} otherwise. 
\end{itemize}
$insertTreeEdge, deleteTreeEdge, insertEdge, deleteEdge$ and $outgoingEdge$ have worst case running time $O(\log^2 n)$ per invocation; $tree$ takes $O(\log n/\log \log n)$ time in the worst case per invocation.

Moreover, if we duplicate $O(\log n)$ copies of the cutset data structure with independent randomness, we can boost the success possibility to $1-1/n^c$. We call it ``boosted cutset data structure'' for future reference.
\end{lemma}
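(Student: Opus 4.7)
My plan is to assemble the data structure by layering two ingredients that have already been previewed in the preceding discussion: Euler-tour trees equipped with aggregatable attributes, and the random-sampling XOR trick. Concretely, I would maintain $\lceil 2\log n\rceil+1$ subsampled edge sets $E_0,\ldots,E_{2\log n}$ with each edge placed in $E_i$ independently with probability $2^{-i}$, and represent the dynamic forest $F$ by one ET-tree per tree $T_j\in F$. At each level $i$ and each vertex $v$ I would store $\mathrm{xor}_i(v)=\bigoplus_{e\in E_i,\,e\ni v}\mathrm{name}(e)$ as a subtree-summable attribute of the ET-tree, so that the XOR over all vertices of any tree can be read off from its root in $O(\log n)$ time, and I would additionally keep a hash table keyed by edge names for $O(1)$ membership tests.

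For the running times I would walk through the operations one by one. The calls $insertTreeEdge$ and $deleteTreeEdge$ each reduce to one ET-tree link/cut per level, giving $O(\log n)\cdot O(\log n)=O(\log^2 n)$. The calls $insertEdge(e)$ and $deleteEdge(e)$ update the hash table in $O(1)$ and, for every level $i$ where $e$ is (to be) sampled, XOR $\mathrm{name}(e)$ into $\mathrm{xor}_i(u)$ and $\mathrm{xor}_i(v)$, each an $O(\log n)$ point update in the ET-tree, totalling $O(\log^2 n)$. For $outgoingEdge(T)$ I would, for each of the $O(\log n)$ levels, retrieve the aggregated XOR over $V_T$ in $O(\log n)$, look it up in the hash table, and verify via $tree(\cdot)$ on both endpoints that it really crosses the cut; the first successful level's edge is returned, for a worst-case total of $O(\log^2 n)$. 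Finally, to achieve the stated $O(\log n/\log\log n)$ bound for $tree(v)$, I would implement each ET-tree as a $\Theta(\log n)$-ary weight-balanced tree, so that its height drops to $O(\log n/\log\log n)$ while $O(\log n)$ updates are preserved.

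The substantive content is the success probability for $outgoingEdge(T)$ when $C_T\ne\emptyset$. Setting $k=|C_T|$ and $i^\star=\lceil\log k\rceil\in\{0,\ldots,2\log n\}$, the independence precondition on inputs to $insertTreeEdge$ and $deleteTreeEdge$ ensures that $V_T$ is independent of the sampling coins defining $E_{i^\star}$; hence the $k$ cut edges lie in $E_{i^\star}$ independently with probability $p=2^{-i^\star}\in(1/(2k),1/k]$, and a standard Poisson-style calculation gives
\[
\Pr\bigl[\,|E_{i^\star}\cap C_T|=1\,\bigr]\;=\;kp(1-p)^{k-1}\;\ge\;\Omega(1).
\]
When exactly one cut edge survives, the XOR aggregate at level $i^\star$ is exactly its name, so the hash-table lookup and endpoint check accept it; when more than one survives or no edge survives, these checks reject (and by choosing $\Theta(\log n)$-bit names the chance of a spurious real-edge collision is negligible, so no wrong edge is ever returned). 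Summing over the level $i^\star$ and, if needed, a constant number of neighboring levels pushes the overall success probability above $1/2$. The main obstacle I foresee is exactly this independence step: one must verify inductively over the operation sequence that conditioning on the current forest and on answers of previous $outgoingEdge$ calls does not leak information about the coins used for the future $E_i$; the precondition in the lemma statement is precisely what makes this induction go through, but it requires a careful formal setup.

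For the boosted variant, I would run $r=\Theta(\log n)$ independent copies of the above structure, and have the boosted $outgoingEdge(T)$ scan them in turn and return the first non-null answer. Since each copy individually fails with probability at most $1/2$ and copies are independent, the joint failure probability is at most $2^{-r}\le n^{-c}$, while all update and query times are multiplied by $\Theta(\log n)$.
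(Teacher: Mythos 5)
The paper does not actually prove Lemma~\ref{ds2}---it is explicitly presented as a \emph{summary} of the cutset data structure from \cite{kapron2013dynamic} (the preceding sentence reads ``Lemma~\ref{ds2} below summarizes the cutset data structure implemented in \cite{kapron2013dynamic}''), and the only supporting material in the paper is the two-item informal sketch in Section~1.2 describing the XOR trick and the $2^{-i}$-sampling idea. So there is no in-paper proof for your attempt to be matched against; what you have written is a reconstruction of the cited construction, which is the intended reading of this lemma.

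Your reconstruction is consistent with the paper's sketch and with the \cite{kapron2013dynamic} construction, but a few points deserve more care. First, the single-level bound $kp(1-p)^{k-1}$ with $p\in(1/(2k),1/k]$ bottoms out near $e^{-1}/2\approx 0.18$, not $1/2$; your ``sum over a constant number of neighboring levels'' fix is the right instinct, but it relies on the sampling coins of the different $E_i$ being independent across levels, which should be stated explicitly as part of the construction before invoking a union/complement calculation over three or four adjacent levels to clear $1/2$. Second, the $O(\log n/\log\log n)$ bound for $tree(v)$ via a $\Theta(\log n)$-ary tour tree is asserted but not justified: you should verify that link, cut, and the per-level XOR point updates still run in $O(\log n)$ in the $d$-ary variant (true, but it is a nontrivial claim and is exactly what \cite{kapron2013dynamic} and \cite{henzinger1999randomized} need to arrange). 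Finally, a point of emphasis: what guarantees that a wrong edge is never returned is the explicit verification step (hash-table membership plus the endpoint $tree$ check), which makes the output one-sided correct regardless of name length; the $\Theta(\log n)$-bit names matter only for storage, not for ruling out spurious answers.
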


From now on, we will use the cutset data structure as a black box to construct dynamic connectivity algorithms. The cutset data structure cannot directly be applied to solve the dynamic connectivity problem for the following reason. Suppose we use it to maintain the spanning forest of the graph, after deleting a tree edge (cutting the tree into two parts which we call $A$ and $B$). Even though we can find an edge $e\in E\cap (A\times B)\ne \emptyset$ by invoking $outgoingEdge$, we cannot insert $e$ as a new tree edge into the cutset data structure because of the requirement in Lemma~\ref{ds2} that inserted tree edges be independent of all previous calls to $outgoingEdge$. \cite{kapron2013dynamic} circumvents this issue by implicitly maintaining the structure of a Boruvka tree.  A Boruvka tree records the process of how the vertices merge together to form larger and larger connected components. Starting from isolated nodes on layer $0$, i.e.\ single vertices of $G$, a Boruvka tree is built by coupling connected nodes\footnote{A node in a Boruvka tree corresponds to a set of vertices connected to each other. We say two nodes in a Boruvka tree are connected if there is an edge across the vertex sets corresponding these two nodes. If on the $i$-th layer one node is not matched with another node, then we just make an identical node on the $(i+1)$-st layer, so that on each layer the nodes give a partition of $V$.} on the $i$-th layer to form the nodes on the $(i+1)$-st layer, and we stop the process until we cannot merge any two nodes on some layer. 

Specifically, \cite{kapron2013dynamic} maintains $\ell+1$ forests $F_0,\ldots, F_{\ell}$, where $\ell=\log n$. We refer to $F_i$ as the forest at layer $i$ for $i=0,\ldots, \ell$. They use a boosted cutset data structure $cutset_i$ for each $F_i$. 
For each $i>0$, $F_i$ contains all the edges in $F_{i-1}$, and $cutset_{i-1}$ is used to provide additional edges to $F_{i}$ (that is, $E_{F_0}\subseteq \ldots \subseteq E_{F_{\ell}}$). For a tree $T$ on any layer $i$ where $0\le i <\ell$, we say it is {\em merged} (or {\em matched}) with a partner iff the tree containing $T$ on layer $i+1$ is strictly larger than $T$. \cite{kapron2013dynamic} maintains the invariant (with overwhelming probability $1-1/n^c$, where $c$ is a large constant) that any non-maximal tree on any layer $i$ is merged with a partner. That is to say,

\begin{invariant}\label{invariant1}
For any layer $i=0,\ldots, \ell-1$, for any tree $T\in F_i$, if $T$ is not a maximal tree in $G$, then there is a tree $T'\in F_{i+1}$ such that $V_{T}\subsetneq V_{T'}$. 
\end{invariant}

If Invariant~1 holds, then $F_{\ell}$ is a spanning forest for $G$, because every non-maximal tree on layer $i$ has size at least $2^i$, so that every tree on layer $\ell$ is maximal. To answer Query($u,v$), it is sufficient to check whether $u$ and $v$ are contained in the same tree in $F_{\ell}$ by outputting $cutset_{\ell}.tree(u)=cutset_{\ell}.tree(v)$, which takes time $O(\log n/\log \log n)$. To maintain Invariant~1, \cite{kapron2013dynamic} performs $O(\log n)$ operations (other than $tree$) of the boosted version of cutset data structure for each insertion, and $O(\log^2 n)$ operations (other than $tree$) of the boosted cutset data structure for each deletion. Because the boosted cutset data structure has worst case time $O(\log^3 n)$ per operation, the worst case running time is $O(\log^4 n)$ per insertion, and $O(\log^5 n)$ per deletion.

\subsection{Our Techniques}

We observe that it is possible to use the cutset data structure without boosting, at the cost of only a constant factor of blowup in $\ell$. That is, let $\ell=C\log n$ where $C$ is a large constant. We can prove that these $\ell$ layers are sufficient to ensure that $F_{\ell}$ is a spanning forest with high probability, by ``almost'' maintaining Invariant~\ref{inv2} in our new algorithm.

\begin{invariant}\label{inv2}
For any layer $i =0,\ldots, \ell-1$, conditioned on the randomness of layer $0, \ldots, i-1$ (that is, the randomness used in $cutset_0,\ldots, cutset_{i-1}$), the probability (over the randomness of $cutset_i$) that a non-maximal tree on layer $i$ is merged with a partner is at least $1/2$.
\end{invariant}

More precisely, because there is a subtle dependence issue prohibiting us from arguing correctness directly for our new algorithm, we adapt a coupling method and instead prove that there is another algorithm (whose output is exactly identical to our new algorithm with overwhelming probability) that maintains Invariant~2. Furthermore, we show that Invariant~2 is sufficient to prove that $F_{\ell}$ is a spanning forest with overwhelming probability.

\subsection{Recent and Independent Work}
Recently and independently, by reducing the number of redundant copies of cutset data structures, \cite{gibb2015dynamic} has improved the dynamic connectivity data structure in \cite{kapron2013dynamic}. The work \cite{gibb2015dynamic} improves the update time for the deletion operation from $O(\log^5 n)$ to $O(\log^4 n)$ while keeping the insertion time as $O(\log^4 n)$. Meanwhile, our work shaves one $\log n$ factor from both update operations, achieving $O(\log^4 n)$ time for deletion and $O(\log^3 n)$ for insertion.

\section{Improved Algorithms for Dynamic Graph Connectivity}\label{sec:improved}

In this section, we present our new algorithm with worst case time $O(\log^3 n)$ per insertion, and $O(\log^4 n)$ per deletion. Our algorithm maintains $\ell+1$ layers where $\ell=C\log n$. For each layer $i$ ($0\le i \le \ell$), we maintain a dynamic forest $F_i$ of the dynamic graph $G=(V,E)$ by using a cutset data structure without boosting, denoted as $cutset_i$.  At all times we have for each $i>0$, $F_i$ contains all the edges in $F_{i-1}$, so $E_{F_0}\subseteq \ldots \subseteq E_{F_{\ell}}$. The data structure $cutset_{i-1}$ is used to provide additional edges to $F_{i}$, for $1\le i \le \ell$. That is, when a tree edge is deleted from layer $i$ splitting some tree $T$ into $T_1, T_2$, we attempt to find new matches for $T_1$ and $T_2$ using $cutset_{i-1}$. Our goal is to make $F_{\ell}$ a spanning forest of $G$, so that queries can be answered efficiently. 

We implement Insert and Delete in Algorithm~1 for our dynamic graph $G=(V,E)$, where initially $E=\emptyset$. Update is an auxiliary function for Delete. For Query($u,v$), we simply output $cutset_{\ell}.tree(u)=cutset_{\ell}.tree(v)$. In our implementation, we also need to maintain the size of the trees on each layer in order to check whether a tree is merged with a partner. It can be supported by the ET tree \cite{henzinger1999randomized}, for which the running time is $O(\log n)$ in the worst case per operation. Initially any $F_i$ has no tree edges.

In Algorithm~1, when inserting an edge $e=\{u,v\}$, (1) we first insert $e$ into $cutset_i$ for every $i$, because now $e$ belongs to $G$; (2) If $u$ and $v$ were not connected in $F_\ell$ before inserting $e$, then we add $e$ as tree edges for every $F_i$. When deleting an edge $e=\{u,v\}$, (1) we first remove $e$ from $cutset_i$ for every $i$ (note that at this point it might be possible that $e$ is a tree edge, but at the end of the delete operation $e$ will not be in any $F_i$); (2) For every layer $i$, if $e$ is an edge in $F_i$, then remove it from $F_i$; (3) Start with $i=0$ and go up to $\ell-1$, we perform Update for both $u$ and $v$ on layer $i$. An update operation for $u$ on layer $i$ ensures that if the tree containing $u$ in $F_i$ (denoted by $T_u$) is non-maximal, then with probability at least $1/2$, $T_u$ is merged with a partner. In $Update(i,u)$, we first check whether $T_u$ is merged with a partner. If it is not, we do as follows: try to find an outgoing edge $e'$ for $T_u$ using $cutset_i$, if we find one, then insert $e'$ into $F_{i+1},\ldots, F_{\ell}$. Note that we might need to first remove an edge $e''$ from $F_{k},\ldots, F_{\ell}$ for some $k>i$ to ensure that no cycle is formed. As shown in \cite[Section 4.3]{kapron2013dynamic}, the cycle edge $e''$ can be found in $O(\log n)$ time in the worst case, by maintaining a dynamic data structure whose update time is bounded by $O(\log ^2 n)$ for each edge insertion or deletion to the dynamic graph $G$.

\vspace{2mm}

\textbf{Algorithm~1}
\begin{itemize}
\item Insert($e=\{u,v\}$):

~~~~\textbf{for} $i=0,\ldots, \ell$ 

~~~~~~~~$cutset_i.insertEdge(e)$

~~~~\textbf{if} $cutset_{\ell}.tree(u)\ne cutset_{\ell}.tree(v)$

~~~~~~~~\textbf{for} $i=0,\ldots, \ell$

~~~~~~~~~~~~$cutset_i.insertTreeEdge(e)$

\item Delete($e=\{u,v\}$):

~~~~\textbf{for} $i=0,\ldots, \ell$

~~~~~~~~$cutset_i.deleteEdge(e)$

~~~~\textbf{for} $i=0,\ldots, \ell$

~~~~~~~~\textbf{if} $e\in E_{F_i}$ \texttt{//membership can be tested by maintaining a Boolean vector}

~~~~~~~~~~~~$cutset_i.deleteTreeEdge(e)$

~~~~\textbf{for} $i=0,\ldots, \ell-1$

~~~~~~~~$Update(i, u)$

~~~~~~~~$Update(i, v)$

\item $Update(i, u)$:

~~~~\textbf{if} $size(cutset_i.tree(u)) < size(cutset_{i+1}.tree(u))$ \textbf{return}

~~~~$\{v,w\}=e'\leftarrow cutset_i.outgoingEdge(cutset_i.tree(u))$

~~~~\textbf{if} $e'=\textbf{null}$ \textbf{return}

~~~~\textbf{for} $k=i+1,\ldots, \ell$

~~~~~~~~\textbf{if} $cutset_k.tree(v)=cutset_k.tree(w)$

~~~~~~~~~~~~Find edge $e''$ connecting $cutset_{k-1}.tree(v)$ and $cutset_{k-1}.tree(w)$ in $E_{F_k}$

~~~~~~~~~~~~\textbf{for} $j=k, \ldots, \ell$

~~~~~~~~~~~~~~~~$cutset_j.deleteTreeEdge(e'')$

~~~~~~~~~~~~\textbf{break}

~~~~\textbf{for} $j=i+1,\ldots, \ell$

~~~~~~~~$cutset_i.insertTreeEdge(e')$ 

\end{itemize}

Because only $O(\log n)$ cutset data structure operations are performed in Insert, and $O(\log^2 n)$ cutset data structure operations are performed in Delete (including ones performed in $Update$), the time complexity for insertion is $O(\log^3 n)$; the time complexity for deletion is $O(\log^4 n)$.

In order to prove its correctness, we use a coupling argument. We design Algorithm~2 identical to Algorithm~1 (using exactly the same random bits) except for the implementation of Insert. The only change is that we replace ``$cutset_{\ell}.tree(u)\ne cutset_{\ell}.tree(v)$'' in Algorithm~1 to be ``$u$ and $v$ are not connected in $G$''. For that purpose, Algorithm~2 will record the whole graph $G$. We don't care about the efficiency of Algorithm~2 since it is only a tool used in the analysis. We first prove that Algorithm~2 maintains Invariant~2 in Lemma~\ref{lemma1}, and then in Theorem~1 we prove that Invariant~2 ensures that $F_{\ell}$ in Algorithm~2 is a spanning forest of $G$ with overwhelming probability. $F_{\ell}$ being a spanning forest implies the algorithm answers correctly. Finally in Lemma~\ref{identical}, we prove that with overwhelming probability the outputs of Algorithm~1 and Algorithm~2 are identical, and thus prove the correctness of Algorithm~1.

\begin{lemma}\label {lemma1}
Algorithm~2 maintains Invariant~2. 
\end{lemma}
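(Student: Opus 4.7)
The plan is to prove that at every moment of Algorithm~2 and every non-maximal $T\in F_i$, $\Pr[T\text{ is merged}\mid \mathcal{R}_{<i}]\ge 1/2$, where $\mathcal{R}_{<i}$ denotes the coins used by $cutset_0,\ldots,cutset_{i-1}$ and the remaining probability is over $cutset_i$'s coins. This will reduce to applying the success guarantee of Lemma~\ref{ds2} to the appropriate $cutset_i.outgoingEdge$ call; the real work is checking that Lemma~\ref{ds2}'s hypothesis---that tree-edge arguments fed to $cutset_i$ be independent of $cutset_i$'s previous $outgoingEdge$ outputs---is actually satisfied in Algorithm~2.

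The heart of the proof is an independence claim, established by a straightforward induction on the operation prefix: conditional on $\mathcal{R}_{<i}$, the entire stream of calls that $cutset_i$ receives (operation types, edge arguments, and the tree identifiers passed to $outgoingEdge$) is deterministic. I would audit the four sources: (i) arguments to $insertEdge$ and $deleteEdge$ come verbatim from the input; (ii) arguments to $insertTreeEdge$ come either from the Insert branch---which in Algorithm~2 is gated on the deterministic test ``$u,v$ not connected in $G$'', replacing Algorithm~1's $cutset_\ell.tree$-based test precisely to break a feedback loop---or from the edge returned by $cutset_j.outgoingEdge$ for some $j<i$; (iii) arguments to $deleteTreeEdge$ are either input edges or cycle edges $e''\in F_k$ with $k>j$, both traceable to the input and to lower-layer state; (iv) tree arguments to $outgoingEdge$ are obtained via $cutset_i.tree(\cdot)$, which depends only on $F_i$'s structure, itself (by induction) a function of the operation history and $\mathcal{R}_{<i}$. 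This establishes independence and hence Lemma~\ref{ds2}'s hypothesis for $cutset_i$.

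With independence in hand, I conclude as follows. Fix a moment and a non-maximal $T\in F_i$, and condition on $\mathcal{R}_{<i}$; since $cutset_i$'s stored edge set is exactly $E(G)$, the cut $C_T=E\cap(V_T\times(V\setminus V_T))$ is nonempty. One then checks that either $T$ became non-maximal via an Insert edge bridging two $G$-components (in which case Algorithm~2's Insert branch directly inserts that edge as a tree edge into every $F_j$, merging $T$ deterministically), or an $Update(i,w)$ call with $w\in V_T$ was triggered at the last moment $T$'s non-maximality arose through a Delete; in the latter case Lemma~\ref{ds2} together with the independence established above guarantees that $cutset_i.outgoingEdge(T)$ returns an edge of $C_T$ with probability at least $1/2$ over $cutset_i$'s coins, and on success the algorithm inserts it into $F_{i+1}$ (after possibly removing a cycle edge strictly above layer $i$, which cannot disconnect $V_T$ from its new partner), merging $T$. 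The main obstacle is the independence audit itself: it is the whole reason for analyzing Algorithm~2 rather than Algorithm~1, since Algorithm~1's Insert consults $cutset_\ell.tree$, a function of $\mathcal{R}_{<\ell}$ that contains $cutset_i$'s own coins, closing the forbidden feedback loop.
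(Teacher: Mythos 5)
Your plan has two parts: an independence audit (to license the application of Lemma~\ref{ds2} to $cutset_i.outgoingEdge$), and a case analysis tracing a non-maximal $T$ back to the event that made it non-maximal. The first part is sound and is in fact more explicit than what the paper writes: the paper motivates Algorithm~2 by saying it breaks the $cutset_\ell.tree$ feedback loop, but its proof of Lemma~\ref{lemma1} simply asserts ``the probability that it is merged with a partner is at least $1/2$'' without auditing that every tree-edge argument fed to $cutset_i$ is a function of the operation prefix and $\mathcal{R}_{<i}$ alone. Spelling out that audit (including noticing that cycle edges $e''$ deleted from $cutset_i$ live in $E_{F_k}$ for some $k\le i$) is a real improvement in rigor.

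The second part has a genuine gap in the Insert branch. You claim that if $T$ became non-maximal via an Insert bridging two $G$-components, the Insert merges $T$ ``deterministically.'' That is not correct under the paper's definition of merged. When $e=\{u,v\}$ bridges two components, the tree at layer $i$ becomes $T_{uv}=T_u\cup T_v$ (and at layer $i+1$ it becomes $S_u\cup S_v$). The invariant concerns $T_{uv}$, and $T_{uv}$ is merged iff $|S_u|+|S_v|>|T_u|+|T_v|$, i.e. iff $T_u$ \emph{or} $T_v$ was already merged before the Insert. If, say, $T_u$ was maximal and $T_v$ non-maximal, $T_{uv}$ is merged iff $T_v$ was, which happens only with probability at least $1/2$, not with probability $1$. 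Your ``last moment'' framing has no mechanism to pass this probability along: you would have to recurse into why $T_v$ was merged, which is exactly the inductive step the paper performs in its Case~2 (``the probability that $T_{uv}$ is merged ... is at least the probability that $T_u$ is merged ... by induction''). As written, your argument proves the wrong statement for the Insert case, and without replacing the ``deterministic'' claim by an induction on updates (or an equivalent backward recursion through all combining events, including merges triggered by $Update$ at lower layers), the proof does not go through.

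Concretely: keep the independence audit, but restructure the second half as the paper does --- induct on the number of operations, and in the Insert-across-components case bound $\Pr[T_{uv}\text{ merged}]$ below by $\Pr[T_u\text{ merged}]$ (or $\Pr[T_v\text{ merged}]$) using the inductive hypothesis, rather than asserting determinism.
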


\begin{proof}
We can prove the lemma by induction on the number of updates to the graph. Assume it holds for the first $t$ updates. For the inductive step, we consider $(t+1)$-th update:
\begin{enumerate}[{(1)}]
\item[\textbf{Case 1:}] (we insert an edge $\{u,v\}$ where $u$ and $v$ were already connected in $G$) In this case, we do not change any tree. So the probability remains the same that any non-maximal tree in any layer $i$ is merged with a partner. Moreover, this insertion does not change any maximal tree into being non-maximal.
\item[\textbf{Case 2:}] (we insert an edge $\{u, v\}$ where $u$ and $v$ were not previously connected in $G$) Let $T_u$ and $T_v$ be the trees containing $u$ and $v$ respectively on layer $i$ (here we consider every $i=0,\ldots, \ell-1$). If both $T_u$ and $T_v$ are maximal, then the combined tree (denoted by $T_{uv}$) is also maximal, which does not affect the invariant. If either $T_u$ or $T_v$ is not maximal (say $T_u$ is not maximal), then $T_{uv}$ is also non-maximal. But now the probability that $T_{uv}$ is merged with a partner is at least the probability that $T_u$ is merged with a partner after the first $t$ updates, making the probability no less than $1/2$ by induction.

 For non-maximal trees that do not contain $u$ or $v$, the probability related to the invariant is unchanged.
 \item[\textbf{Case 3:}] (we delete an edge $e=\{u,v\}$) We consider every layer $i$ ($i=0,\ldots,\ell-1$) separately. For non-maximal trees that do not contain $u$ or $v$, the probability related to the invariant is unchanged; for trees containing $u$ or $v$, we analyze the following two cases separately.
\begin{enumerate}
\item In the first case we consider, $e$ is a tree edge in layer $i$. That is, $T_{uv}$ is split into $T_u$ and $T_v$ after removing $e$. Then the deletion algorithm will try to find a partner for $T_u$ and $T_v$ separately. If $T_u$ is not maximal, then over the randomness on layer $i$ (that is, the randomness in $cutset_i$), the probability that it is merged with a partner is at least $1/2$ via invoking $Update(i,u)$ (and similarly for $T_v$).
\item In the second case, $e$ is not a tree edge in layer $i$. Then Algorithm~2 tries to merge $T_u$ (and also $T_v$) with a partner via invoking $Update(i,u)$. This ensures both $T_u$ and $T_v$ satisfy the invariant. 
\end{enumerate}

\end{enumerate}
\end{proof}

We assume that $m$ (where $m$ is the number of operations to the dynamic graph) is upper bounded by $n^{c_1}$ where $c_1$ is a constant. Let $c=10c_1+10$, and let $C=50c$ where $C$ is the constant appearing in $\ell =C\log n$.

\begin{theorem} \label{theorem}
If Algorithm~2 maintains Invariant~2, then with probability greater than $1-1/n^c$, $F_{\ell}$ is a spanning forest for $G$.
\end{theorem}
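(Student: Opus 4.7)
The plan is to fix a maximal connected component $K \subseteq V$ of $G$ at a specified time and prove that the number of $F_\ell$-trees inside $K$ equals $1$ except with very small probability, then union bound over the $\le n$ maximal components and the $\le m \le n^{c_1}$ time steps.

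Concretely, for $i = 0, \ldots, \ell$ let $X_i$ denote the number of trees of $F_i$ contained in $K$ (every $F_i$-tree is either entirely inside $K$ or entirely outside, since $K$ is a maximal component of $G$). Observe that $X_i = 1$ forces $X_{i+1} = 1$ because $F_i \subseteq F_{i+1}$, while $X_i > 1$ implies that every layer-$i$ sub-tree of $K$ is strictly smaller than $K$ and hence non-maximal. I plan to establish the one-step inequality
\[
\E[X_{i+1} - 1 \mid \text{randomness of } cutset_0,\ldots,cutset_{i-1}] \le \tfrac{3}{4}(X_i - 1),
\]
which is trivial when $X_i = 1$. When $X_i > 1$, invoke Invariant~\ref{inv2} on each of the $X_i$ non-maximal sub-trees of $K$ at layer $i$: each is merged with probability at least $1/2$, so letting $M_i$ be the number of merged ones, linearity of conditional expectation gives $\E[M_i \mid \cdots] \ge X_i/2$. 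On the combinatorial side, consider the ``merging graph'' whose vertices are the layer-$i$ sub-trees of $K$ and whose edges are the elements of $E_{F_{i+1}} \setminus E_{F_i}$ lying inside $K$: since $F_i \subseteq F_{i+1}$ its connected components are exactly the layer-$(i+1)$ sub-trees of $K$, so the $X_i - M_i$ unmerged sub-trees are isolated while the $M_i$ merged ones group into components of size at least $2$, yielding $X_{i+1} \le (X_i - M_i) + M_i/2 = X_i - M_i/2$. Taking conditional expectation and subtracting $1$ produces the displayed inequality.

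Iterating the one-step bound gives $\E[X_\ell - 1] \le (3/4)^\ell (n-1)$, and because $X_\ell - 1$ is a non-negative integer, Markov's inequality yields $\Pr[X_\ell \ge 2] \le \E[X_\ell - 1] \le (3/4)^\ell n = n^{1 - C\log_2(4/3)}$. With $\ell = C \log n$ and the hypothesized constants $C = 50c$, $c = 10c_1 + 10$, a direct check shows $C\log_2(4/3) \ge c + c_1 + 2$, so the per-component per-time failure probability is at most $n^{-c-c_1-1}$; union bounding over $\le n$ maximal components and $\le n^{c_1}$ time steps then gives the claimed overall failure probability of at most $n^{-c}$.

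The one place to be careful is the conditioning: Invariant~\ref{inv2} is only a pointwise statement about the merge probability of each individual non-maximal layer-$i$ tree, and asserts no independence across different trees at the same layer. Fortunately, only linearity of conditional expectation is needed to bound $\E[M_i]$, and the combinatorial ``isolated vs.\ size-$\ge 2$-component'' step is purely deterministic, so no independence assumption is ever used.
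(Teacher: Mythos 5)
Your proof is correct, and it follows the same underlying strategy as the paper: turn Invariant~\ref{inv2} into a one-step conditional expectation contraction for the number of ``too-small'' trees, iterate it, and finish with a tail bound. The bookkeeping differs in two minor ways. First, you decompose by maximal component $K$, track $X_i^K$, and union-bound over components; the paper instead tracks the single global quantity $X_i = $ number of non-maximal layer-$i$ trees, for which the same combinatorics directly gives $\E[X_{i+1}\mid cutset_{\le i-1}]\le \tfrac{3}{4}X_i$ with $X_0\le n$, avoiding the component union bound entirely. Second, and more substantively, for the tail you simply iterate the expectation bound and apply Markov once, whereas the paper defines indicator variables $Y_i$ for the event $X_{i-1}=0$ or $X_i\le \tfrac{7}{8}X_{i-1}$, applies Markov at each layer to get $\Pr[Y_i=1]\ge 1/7$, and then uses a Chernoff bound on $\sum Y_i$ after coupling with independent Bernoullis. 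With the generous constant $C=50c$ both routes give plenty of slack, and your direct $\Pr[X_\ell\ge 1]\le \E[X_\ell]\le (3/4)^\ell n$ is arguably cleaner; the Chernoff machinery is not needed for this statement. One small remark: the extra union bound over the $\le n^{c_1}$ time steps is superfluous for the theorem as stated---the theorem is a per-time claim, and the union bound over operations is taken afterward in Lemma~\ref{identical}---but including it only strengthens what you prove, so it is harmless.
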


\begin{proof}
Let $X_{i}$ denote the number of non-maximal trees on layer $i$, for $0\le i \le \ell$. It is sufficient to prove that $\Pr[X_{\ell}>0]\le n^{-c}$. We have $X_0\le n$, and $X_i$ is non-increasing. Let $Y_i$ ($1\le i\le \ell$) denote the indictor random variable for the event that $X_{i-1}=0$ or $X_{i}/X_{i-1}\le 7/8$. Because of the invariant, $\mathbb{E}_{cutset_i}[X_{i}|X_0,\ldots, X_{i-1}]\le \frac{3}{4}X_{i-1}$. By Markov inequality, we have $\Pr(Y_i=1)\ge 1/7$ conditioned on $Y_1,\ldots, Y_{i-1}$. By coupling with $n$ independently distributed $\text{Bernoulli}(1/7)$ and then applying Chernoff bound, we have 

$$\Pr(X_{\ell}>0)\le \Pr(X_{\ell}\ge 1) \le \Pr(\sum_{i=1}^{\ell}{Y_i}\le 8\log n)\le e^{-\frac{C\log n}{21} (1-56/C)^2}<n^{-c}.$$

\end{proof}

\begin{lemma}\label{identical}
For any dynamic graph operations of length $m$, over the randomness used in the cutset data structure, the outputs of Algorithm~1 and Algorithm~2 are identical for all the queries with probability at least $1-n^{-c/2}$.  
\end{lemma}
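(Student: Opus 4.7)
The plan is to define a high-probability event $E$ under which Algorithm~1 and Algorithm~2 maintain identical internal state after every operation, and then to argue that $E$ occurs except with probability $n^{-c/2}$, which immediately forces identical query outputs.

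First I would bound $\Pr[E]$, where $E$ is the event that for every $t\in\{0,1,\dots,m\}$, the forest $F_{\ell}$ of Algorithm~2 is a spanning forest of $G$ after the $t$-th operation. By Lemma~\ref{lemma1}, Algorithm~2 satisfies the hypothesis of Theorem~\ref{theorem} at every moment, so at any fixed time step the probability that $F_{\ell}$ fails to be a spanning forest is at most $n^{-c}$. A union bound over the $m\le n^{c_1}$ operations gives $\Pr[\neg E]\le m\cdot n^{-c}\le n^{c_1-c}$, and since $c=10c_1+10$ we have $c-c_1\ge c/2$, so $\Pr[\neg E]\le n^{-c/2}$.

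Next I would show by induction on $t$ that, conditional on $E$, the internal states of Algorithm~1 and Algorithm~2 coincide after every operation. Since the two algorithms consume the same random bits and both start in the empty state, the base case is immediate. For the inductive step, the code for Delete and Query is literally identical in the two algorithms, so identical state plus identical input produces identical transitions. The only source of divergence is the Boolean test inside Insert: Algorithm~1 asks whether $cutset_{\ell}.tree(u)=cutset_{\ell}.tree(v)$, while Algorithm~2 asks whether $u$ and $v$ are connected in the true graph $G$. Under the inductive hypothesis, the two copies of $cutset_{\ell}$ (hence of $F_{\ell}$) agree; under $E$, that common $F_{\ell}$ is a spanning forest of $G$; so the $tree$-based test returns exactly the same Boolean as the $G$-connectivity test. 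Both algorithms therefore take the same branch, and their states after operation $t+1$ remain identical; in particular all subsequent query outputs coincide.

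The only point to watch is the direction of the coupling: the event $E$ is defined purely in terms of Algorithm~2, yet the argument must keep Algorithm~1 in lockstep with it. This is handled cleanly by the induction, since at each step the hypothesis identifies the two states up through the current time, so any statement about Algorithm~2's forest $F_{\ell}$ is simultaneously a statement about Algorithm~1's forest $F_{\ell}$ used in the next Insert test. No circularity arises because the inductive step only appeals to this identification strictly before the step. The remaining work is the union bound above together with the elementary inequality $c-c_1\ge c/2$, both routine, so the lemma reduces to combining Theorem~\ref{theorem} with this coupling.
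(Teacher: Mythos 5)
Your proposal is correct and follows essentially the same route as the paper: the paper's one-line proof also observes that as long as $F_{\ell}$ is a spanning forest the two algorithms' cutset data structures stay identical, and then concludes by a union bound over the $m$ operations together with Theorem~\ref{theorem}. You have merely expanded the implicit induction (that agreement of the spanning forest makes the Insert test in Algorithm~1 equivalent to the true connectivity test used by Algorithm~2) and made the union-bound arithmetic explicit.
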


\begin{proof}
As long as $F_{\ell}$ is a spanning forest, the cutset data structures in Algorithm~1 and Algorithm~2 are exactly the same. So the probability is upper bounded by $n^{-c}\cdot m<n^{-c/2}$ by taking a union bound and then applying Theorem~\ref{theorem}.
\end{proof}

\section*{Acknowledgments}

The author would like to express gratitude for the wonderful research support of and inspiring discussions with Professor Jelani Nelson. The author also would like to thank Professor Valerie King for pointing out an error in his original manuscript.

\newpage

\end{document}